\newtheorem{theorem}{Theorem}
\newtheorem{assumption}{Assumption}
\newtheorem{theorem1}{Theorem1}
\newtheorem{lemma}[theorem1]{Lemma}
\newenvironment{definition}[1][Definition]{\begin{trivlist}
\item[\hskip \labelsep {\bfseries #1}]}{\end{trivlist}}
\title{\LARGE \bf
Modelling, Controllability and Gait Design for a Spherical Flexible Swimmer
}
\author{Sudin~Kadam, Ravi N.~Banavar, and~Vivek~Natarajan
\thanks{All the authors are with Systems and Control Engineering Department, Indian Institute of Technology Bombay, Mumbai, India.}
\thanks{e-mail: sudin@sc.iitb.ac.in, banavar@iitb.ac.in, 
vivek.natarajan@iitb.ac.in.}}
\begin{document}

\maketitle
\thispagestyle{empty}
\pagestyle{empty}

\begin{abstract}

This paper discusses modelling, controllability and gait design for a spherical flexible swimmer. We first present a kinematic model of a low Reynolds number spherical flexible swimming mechanism with periodic surface deformations in the radial and azimuthal directions. The model is then converted to a finite dimensional driftless, affine-in-control principal kinematic form by representing the surface deformations as a linear combination of finitely many Legendre polynomials.  A controllability analysis is then done for this swimmer to conclude that the swimmer is locally controllable on $\mathbb{R}^3$ for certain combinations of the Legendre  polynomials. The rates of the coefficients of the polynomials are considered as the control inputs for surface deformation. Finally, the Abelian nature of the structure group of the swimmer's configuration space is exploited to synthesize a curvature based gait for the spherical flexile swimmer and a rigid-link swimmer.
\end{abstract}

\section{INTRODUCTION}

Locomotion relates to a variety of movements resulting in transportation from one place to another, and is crucial to existential requirements of microbial and animal life. A vast majority of living organisms are found to perform undulatory swimming motion at microscopic scales. Reynolds number, which is the ratio of inertial forces to viscous forces acting on the body in fluid, at these micro-scales conditions is extremely low - to the order of $10^{-4}$. i.e. viscous forces highly dominate the motion. To get a relative sense of the numbers, the Reynolds number for a man swimming in water is of the order
of $10^4$, whereas that for a man trying to swim in honey is of the order of $10^{-3}$ \cite{cohen2010swimming}, \cite{najafi2004simple}.

The analysis of these biological and bio-inspired engineering mechanisms at microscopic scales has attracted considerable attention in the recent literature. Swimming of unicellular organisms is one of the most fundamental processes in biology. Mechanism of motion of sperm cells, microbes \cite{gray1955propulsion}, \cite{gaffney2011mammalian} is important in not only understanding the locomotion at micro scales but is also crucial to conceive biomimetic robots for applications in medicine as drug delivery, \cite{alouges2015can}, \cite{cho2014mini}. However, a lot of this research has been around the swimmers with slender, rigid links, see \cite{avron2008geometric},  \cite{melli2006motion}, \cite{passov2012dynamics}, \cite{tam2007optimal}. However, many of the micro-organisms observed in the nature have non-slender shapes and are flexible, see biological mechanisms shown in figures \ref{fig:Amoeba} and \ref{fig:cocci}.

\begin{figure}[!htb]
\minipage{0.22\textwidth}
\begin{center}
  \includegraphics[width=0.95\linewidth]{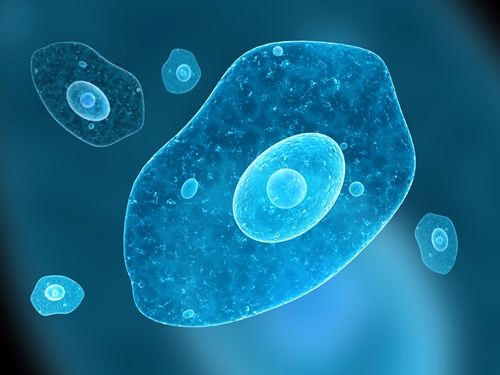}
  \caption{Amoeba \cite{amoeba}}\label{fig:Amoeba}
  \end{center}
\endminipage\hfill
\minipage{0.26\textwidth}
\begin{center}
  \includegraphics[width=0.62\linewidth]{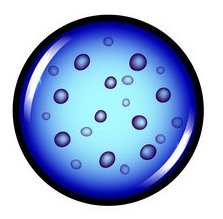}
  \caption{Micrococci bacteria \cite{microcci}}\label{fig:cocci}
\end{center}
\endminipage
\end{figure}
The work in this article pertains to a microswimmer which is spherical in shape and can have smooth, controlled surface deformations in the radial and tangential directions. From the mathematical modelling and control perspective, for the class of microswimming locomotion systems, the configuration space is amenable to the framework of a principal fiber bundle \cite{bloch1996nonholonomic}. In this approach, the configuration variables are naturally partitioned as the base and the group variables, and the former are, usually, fully actuated. The low Reynolds number effect gives rise to a principal kinematic form of the equations of  motion. Further, the shape space as well as the structure group of many of these systems is either the Special Euclidean group $SE(3)$ or  one  of  its  subgroups;  the  reader  is  referred  to  \cite{kelly1995geometric},  \cite{ostrowski1998geometric}  for many illustrative examples of such systems. A principal kinematic form of equation is obtained for this type of swimmers based on an approximate solution of the Stokes equations. The approach used in our work to model the flexible deformations of the swimmer is inspired by the theory and techniques presented in \cite{loheac2013controllability}, \cite{galdi2011introduction}.

In \cite{loheac2013controllability}, a kinematic model of the swimmer for radial deformations is obtained using a weak solution of the Stokes equation and a force operator followed by a controllability analysis using only two Legendre polynomials. \cite{shapere1987self} introduces a theory for infinite dimensional model of the radial and azimuthal deformations of this swimmer. The contribution of our work is that we present a model and controllability analysis for both radial as well as azimuthal axisymmetric deformations for any two successive Legendre polynomials. The controllability analysis in the present work thus gives a set of controls which are from a larger set of controls. Furthermore, using the Abelian natur of the Lie group of the system configuration space, we synthesize gait for the spherical flexible swimmer using the curvature technique.

The paper is organized as follows. In the next section we describe the swimmer's motion and revisit the Navier-Stokes equations along with the boundary conditions. In section III, we define the space of deformations and then obtain the kinematic model of the swimmer for radial and tangential of surface deformations. Section IV presents the controllability analysis of the swimmer based on the Chow's theorem. In section V, we present the curvature based gait synthesis technique for the spherical flexible swimmer using the Abelian nature of the group of the swimmer's configuration space. Here we also give an example of a rigid-linked microswimmer - a symmetric version of a popular rigid-linked low Reynolds number swimmer - the Purcell's swimmer.

\section{Background}
In this section, we summarize the modelling work presented in \cite{loheac2013controllability}.
\begin{subsection}{Problem Setup}
We consider a spherical swimmer surrounded by a viscous incompressible fluid filling the remaining part of the three dimensional space. We denote the swimmer's initial shape by $S_0$, the unit ball in $\mathbb{R}^3$. $S(t)$ denotes the domain in $\mathbb{R}^3$ occupied by the swimmer at time $t$. The fluid domain is thus given by $\Omega(t) = \mathbb{R}^3 / \overline{S(t)}$. Due to low Reynolds number conditions the velocity field $u(t, \cdot) : \Omega \to \mathbb{R}^3$ and the pressure field $p(t, \cdot): \Omega (t) \to \mathbb{R}$ satisfy the Stokes equations in $\Omega(t)$, \cite{galdi2011introduction}:
\begin{align}\label{eq:Stokes_equations}
-\mu \Delta u(t, \cdot ) + \nabla p(t, \cdot) &=0, \\
\nabla \cdot u(t, \cdot) &= 0
\end{align}
where $\Delta$ is the Laplacian operator, $\nabla$ is the gradient operator, $\mu \in \mathbb{R}^+$ is the fluid viscosity. The boundary conditions are
\begin{align}
\lim_{|x| \to \infty} u(t,x) = 0 	\qquad (t \geq 0), \\
u(t,\cdot) = v_s(t, \cdot) \qquad on \: \partial S(t),
\end{align}
where $v_s$ is the velocity of the swimmer on its boundary.
\begin{figure}[h!]
\centering
\includegraphics[scale=0.35]{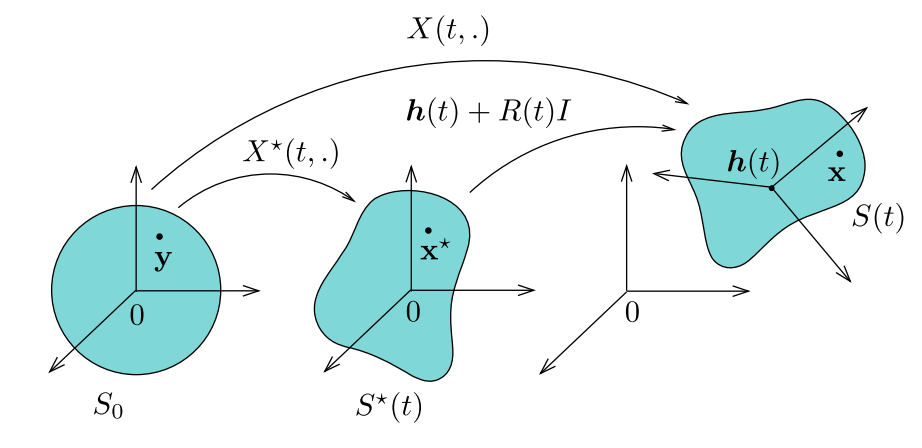}
\caption{Deformation and gross motion of the swimmer \cite{loheac2013controllability}}
\label{swimmer_shape_change}
\end{figure}

For every $v_0 \in H^{1/2} (\partial S) $ a unique weak solution $(v,\:p)$ of the Stokes problem in equations (1) and (2) exists for $ u \in H^{1/2}$ where $H^{1/2}$ is a fractional Sobloev space \cite{galdi2011introduction}. To obtain the kinematic model we define a bounded linear operator $\mathbb{F}(S) : H^{1/2} (\partial S) \to \mathbb{R}$. which 
\begin{equation}\label{force_operator}
\mathbb{F}(S) (v_0) = \int_{\Omega} \sigma (u,p) : \nabla u_z dx
\end{equation}
where $u_z$ denotes the component of fluid velocity $u$ along the axis of motion $v_0$ and $:$ denotes element wise multiplication, i, $\sigma (u,p) : \nabla u_z = \sum_{i,j} (\sigma (v,p))_{ij}) \times (\nabla u_z)_{ij}$. The force operator\footnote{We note that this is a bounded linear operator which is invariant under translation of $S$, i.e. for every $h \in \mathbb{R}^3$ and for every $v_0 \in H^{1/2} \left( \partial (S+h) \right)$, $\mathbb{F}(S+h) (v_0) = \mathbb{F}(S)(\tilde{v}_0)$, where $\tilde{v}_0 \in H^{1/2} (\partial S)$  and for every $x \in \partial S$, $\tilde{v}_0 (x) = v_0 (x+h)$.} $\mathbb{F}(S)$ associates to a given Dirichlet boundary conditions the total force exerted by the fluid.
\end{subsection}

\begin{subsection}{The configuration space and motion decomposition}
To analyze the motion of the swimmer, we decompose the entire motion into a rigid part and a flexible part (termed the shape change), see figure \ref{swimmer_shape_change}. The rigid part, characterized by the Special Euclidean group $SE(3) = SO(3) \ltimes \mathbb{R}^3$, consists of a translational displacement $h(t) \in \mathbb{R}^3$ and an orientation $R(t) \in SO(3)$ of the body frame attached at the geometric center of the body with respect to the inertial frame. The shape change, superimposed over 
the rigid part, is defined through the map $X^* : [0. \infty) \times S_0 \to \mathbb{R}^3$, which maps the original points in $S_0$ to deformed shape but with stationary center of mass. The total motion is characterized by $X(t, y): [0, \infty) \times S_0 \to \mathbb{R}^3$. For $y \in S_0,$ $X(t,y) = h(t) + R(t) X^*(t,y)$ is a pointwise map of every point $y$ on the sphere. Translational velocity of the center of mass of the swimmer is denoted by $\dot{h}(t)$, and the angular velocity of the body frame (at the center of mass) with respect to the inertial frame is denoted by $\omega$. The translational velocity $v$ of the swimmer for $x \in S(t)$ is then
%
%
\begin{equation}
\begin{aligned}
v(t,x) &= \underbrace{v_0 + \omega (t) \times \left( x-h(t) \right)}_{\text{Rigid component}} + \\ 
& \quad \: \underbrace{R(t) \frac{\partial X^*}{\partial t}(t,X^*(t,\cdot)^{-1}\left( R^T(t)\left(x-h(t)\right) \right)}_{\text{Shape component}}.
\end{aligned}
\end{equation}

 The velocity at the swimmer's outer surface defines the Dirichlet boundary conditions \cite{loheac2013controllability} for the exterior Stokes problem \eqref{eq:Stokes_equations}. We define the Cauchy stress tensor by $\sigma (u,p) = \mu \left( \nabla u + \nabla u^T \right) -p I_3$. The consequence of the low Reynolds number assumption is that the net forces and moments acting on the swimmer are always zero -
\begin{align*}
& \sum F = 0 \quad \implies \quad \int_{\partial S(t)} \sigma \textbf{n} d \Gamma =0 \\
& \sum M = 0 \quad \implies \quad \int_{\partial S(t)} (x - h) \times \sigma \textbf{n} d \Gamma = 0
\end{align*}
where \textbf{n} is the local outward normal to the swimmer surface.
\end{subsection}

\section{Axisymmetric surface deformations and a kinematic model}
The kinematic model is derived by solving the Stokes equation \eqref{eq:Stokes_equations}. In our work we employ the solution for the spherical swimmer for surface deformations in radial and azimuth direction from \cite{shapere1987self}. We consider spherical coordinates $(r, \theta, \phi) \in \mathbb{R}^+, \times [0, \pi], \times [0, 2\pi)$ and $e_z$ as the body frame axis of symmetry along which the swimmer performs the translational motion. 

\begin{assumption}
We limit the shape change of the swimmer to axisymmetric deformations in the radial $(e_R)$ and azimuthal $(e_{\theta})$ directions at any given point on the surface of the sphere, as shown in figure \ref{spherical_coordinates_Galdi}. \end{assumption}

\begin{figure}[h!]
\centering
\includegraphics[scale=0.35]{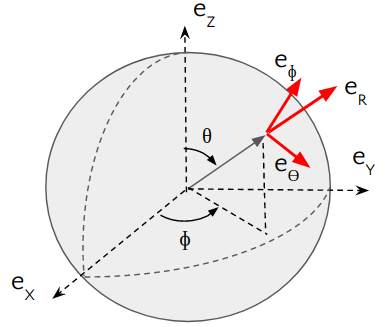}
\caption{The spherical swimmer and axes definition}
\label{spherical_coordinates_Galdi}
\end{figure}
%
%
We consider periodic axisymmetric shape changes composed of small deformations in the radial and azimuthal directions of a sphere of $S_0$ through the maps $r^*$ and $\theta^*$, respectively. For small number $\epsilon > 0$, the deformations can be written in terms of summation of $n^{th}$ order Legendre polynomials\footnote{The Legendre polynomials are solutions to the Legendre differential equation $(1-x^2) \frac{d^2 y}{dx^2} - 2x \frac{dy}{dx}+l(l+1)y=0, \: l \in \mathbb{N}$.}. At any given point on the swimmer surface, the deformed radius $r^*$ and the deformed angle from the axis of symmetry $\theta^*$ are defined as follows
\begin{align}
    r^*(\theta, t) = 1 + \epsilon \sum_{n=0}^{\infty} \alpha_n(t) P_n(\cos \theta), \label{radial_deformation}\\
    \theta^*(\theta, t) = \theta + \epsilon \sum_{n=0}^{\infty} \beta_n(t) V_n(\cos \theta) \label{azimuthal_deformation}
\end{align}
where $\alpha_n$ and $\beta_n$ are periodic functions of time whose rates $\dot{\alpha}_n, \dot{\beta}_n$ are the control variables, $P_n$ are the $n^{th}$ order Legendre polynomials as a function of cosine of $\theta$ and
\begin{equation}
    V_n(cos \theta) = \frac{1}{n+1} \frac{\partial}{\partial \theta} P_n(\cos \theta).
\end{equation}
%
In this case the radial and azimuthal velocities on the surface of the swimmer become
\begin{align}
    v_r(\theta, t) = \epsilon \sum_n \dot{\alpha}_n(t) P_n(\cos \theta), \label{radial_velocity} \\
    v_{\theta}(\theta, t) = \epsilon \sum_n \dot{\beta}_n(t) V_n(\cos \theta) \label{azimuthal_velocity}
\end{align}
%



\subsection{Kinematic model of the swimmer}
For the type of axisymmetric deformations defined in the previous section, we note that the motion for axisymmetric deformation, if at all, has to be in the $e_z \in \mathbb{R}^3$ direction only. We then employ the condition that at low Reynolds number conditions the inertia forces are negligible, and hence the net forces acting on the swimmer in any direction has to be zero. Hence, using the force operator, we can write the force balance equation in the $e_z$ direction as follows. 
\begin{equation}\label{eq:force_balance}
\begin{aligned}
 \dot{h} \mathbb{F}(X^*(t,\cdot)(S_0))e_z &+ \sum_{i=2}^{\infty} \left( \dot{\alpha}_i \mathbb{F}(X^*(t,\cdot)(S_0))P_i \right. \\ 
 & \left. + \dot{\beta}_i \mathbb{F}(X^*(t,\cdot)(S_0))V_i \right) = 0
\end{aligned}
\end{equation}
This leads to the expression for the net velocity of the swimmer in the $e_z$ direction as follows
\begin{equation}\label{eq:ode_swimmer}
\dot{h} = - \sum_{i=2}^{\infty} \frac{ \dot{\alpha}_i \mathbb{F}(X^*(t,\cdot)(S_0))P_i + \dot{\beta}_i \mathbb{F}(X^*(t,\cdot)(S_0))V_i }{\mathbb{F}(X^*(t,\cdot)(S_0))e_z}, \\
\end{equation}
According to \cite{shapere1987self}, based on the solution to the Stokes problem for the given radial and azimuthal deformations, the translational velocity $\dot{h}$ of the swimmer is obtained as the following series.
\begin{align}
    \dot{h} &= \epsilon^2 \sum_{n=2}^{\infty} \Bigg( \frac{(n+1)^2 \alpha_n\dot{\alpha}_{n+1} - (n^2-4n-2)\alpha_{n+1}\dot{\alpha}_n)}{(2n+1)(2n+3)} - \nonumber \\ 
    & \qquad \qquad \quad \frac{(n+1)(n+2)\alpha_n\dot{\beta}_{n+1} - n(n+1)\beta_{n+1}\dot{\alpha}_n}{(2n+1)(2n+3)} + \nonumber \\ 
    & \qquad \qquad \quad \frac{n(3n+2)\alpha_{n+1}\dot{\beta}_n + n(n+2)\beta_n \dot{\alpha}_{n+1}}{(2n+1)(2n+3)} - \nonumber \\
    & \qquad \qquad \quad \frac{n(n+2)\beta_n\dot{\beta}_{n+1} - n^2 \beta_{n+1}\dot{\beta}_n}{(2n+1)(2n+3)} \Bigg) + O(\epsilon^3) \label{solution_blake}    
\end{align}

We note that this equation is in a principal kinematic form where the shape space is infinite dimensional and the control inputs $\dot{\alpha}_i, \dot{\beta}_i, \: i = 2, 3, \dots, \infty$.

\section{Controllability of the flexible swimmer}
For systems of the form \eqref{solution_blake}, Chow's theorem is a useful tool to obtain a controllability result through which we show that the swimmer's center of mass can be translated from any given initial position to any given final position in $\mathbb{R}^3$. We first recall Chow's theorem. Let $m, n \in \mathbb{N}$ and let $(f_i)$ , $i= 1, \dots, m$ be $C^{\infty}$ vector fields on $\mathbb{R}^n$. Consider the control system,
\begin{equation}\label{eq:control_affine_system}
\dot{q} = \sum_{i=1}^m u_if_i(q)
\end{equation}
with input function $u = (u_i)_{i=1,m} \in C^{\infty} \left([0, \infty), \: B_{\mathbb{R}^m (0, r)} \right)$ where $B_{\mathbb{R}^m (0, r)}$ is an open ball for some $r > 0$. Let $\mathcal{O}$ be an open and connected set of $\mathbb{R}^m$. 
\begin{lemma}{(Chow's Theorem)}
The system \eqref{eq:control_affine_system} is locally controllable\footnote{The system \eqref{eq:control_affine_system} is said to be locally controllable at point $q$ in its configuration space, if the reachable set from $q$ contains an open neighborhood of $q$.} at $q \in \mathcal{O}$ if and only if

\begin{equation}
Lie_q \{ f_1, \dots, f_m \} = \mathbb{R}^n \quad \forall q \in \mathcal{O}
\end{equation}
\end{lemma}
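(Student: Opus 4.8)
This is the classical Chow--Rashevskii theorem, so the plan is to route the argument through the \emph{orbit} $\mathrm{Orb}_q$ of $q$ under the family $\mathcal{F}=\{f_1,\dots,f_m\}$, i.e.\ the set of points reachable from $q$ by arbitrary finite compositions of the flows $e^{\pm s f_i}$. First I would use the two structural features of \eqref{eq:control_affine_system} --- it has no drift and the control set $B_{\mathbb{R}^m}(0,r)$ is \emph{symmetric} --- to show that the reachable set $\mathcal{R}_q$ from $q$ equals $\mathrm{Orb}_q$. Indeed, a piecewise-constant control drives $q$ to a point $e^{s_k f_{i_k}}\circ\cdots\circ e^{s_1 f_{i_1}}(q)$ with \emph{arbitrary real} $s_j$: the positivity of the elapsed time is absorbed into the magnitude of the control, and a negative $s_j$ is produced by the control $-u_j$. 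Conversely, any admissible trajectory has velocity in $\mathrm{span}\{f_i\}$ at almost every time, hence stays inside the orbit; so the two sets coincide.

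Next I would invoke Sussmann's orbit theorem: $\mathrm{Orb}_q$ is a connected immersed submanifold of $\mathbb{R}^n$, and at every $x\in\mathrm{Orb}_q$ one has $Lie_x\{f_1,\dots,f_m\}\subseteq T_x\mathrm{Orb}_q$. Establishing this containment is the crux. The fields $f_i$ themselves lie in $T_x\mathrm{Orb}_q$ by differentiating the curve $s\mapsto e^{sf_i}(x)$, which stays in the orbit; conjugating by flows then produces all iterated brackets, via the identity $\tfrac{d}{ds}\big|_{0}(e^{-sf})_*g=[f,g]$, equivalently through the commutator-of-flows expansion in which concatenating $e^{\sqrt{t}\,f}$, $e^{\sqrt{t}\,g}$, $e^{-\sqrt{t}\,f}$, $e^{-\sqrt{t}\,g}$ yields a net displacement along $[f,g]$ to leading order in $t$. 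Iterating, the whole Lie algebra generated by $\mathcal{F}$, evaluated at $x$, is tangent to the orbit. This bracket-generation bookkeeping --- together with the proof that the orbit is genuinely a manifold --- is the step I expect to be the main obstacle; it is essentially the content of the orbit / Stefan--Sussmann theory, and I would cite it rather than reprove it.

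Granting these facts, \emph{sufficiency} is short: if $Lie_q\{f_1,\dots,f_m\}=\mathbb{R}^n$ then $T_q\mathrm{Orb}_q=\mathbb{R}^n$, so the orbit has full dimension and is therefore an \emph{open} subset of $\mathbb{R}^n$ containing $q$; since $\mathcal{R}_q=\mathrm{Orb}_q$, the reachable set contains a neighbourhood of $q$, which is precisely local controllability at $q$, and the same holds at every point of $\mathcal{O}$. For \emph{necessity} I would argue by contraposition, using the real-analytic refinement (Hermann--Nagano) in which $T_x\mathrm{Orb}_x = Lie_x\{f_1,\dots,f_m\}$ holds with equality: if the rank condition fails at some $x\in\mathcal{O}$, then $\mathrm{Orb}_x$ is a proper immersed submanifold of dimension $<n$, so $\mathcal{R}_x=\mathrm{Orb}_x$ has empty interior and local controllability cannot hold there. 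The smooth-versus-analytic distinction only bites in this last step; for the controllability result used later in the paper only the sufficiency direction is needed, where the genuine work is the Lie-bracket computation of the preceding paragraph.
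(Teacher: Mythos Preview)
The paper does not prove this lemma at all: it merely \emph{recalls} Chow's theorem as a classical fact and then applies it to the swimmer model. So there is no proof in the paper to compare your sketch against.

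That said, your outline is a sound rendition of the standard Chow--Rashevskii argument via the orbit theorem: the identification $\mathcal{R}_q=\mathrm{Orb}_q$ from driftlessness and symmetry of the control set, then Sussmann's orbit theorem to conclude that full Lie-algebra rank forces the orbit to be open. You are also right to flag that the ``only if'' direction, as stated in the paper for merely $C^\infty$ fields, is not true in general and requires the analytic Hermann--Nagano refinement; the paper's biconditional is stated somewhat loosely, and indeed only the sufficiency direction is used in the subsequent controllability proof for the swimmer.
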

We now examine the local controllability of the spherical flexible swimmer.
\begin{theorem}
The kinematic system given by \eqref{eq:ode_swimmer} is locally controllable at the spherical shape $S_0$ for $n \in \mathbb{N}, \: n>2$ when either of the following 2 conditions on the swimmer deformations are satisfied
\begin{enumerate}
    \item The radial deformations result from any two successive Legendre polynomials $P_n$ and $P_{n+1}$, or
    \item The azimuthal deformations result from any 2 successive Legendre polynomials $V_n$ and $V_{n+1}$
\end{enumerate}
\end{theorem}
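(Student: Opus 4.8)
The plan is to verify the Lie algebra rank condition of Lemma~1 by a single bracket computation, after reducing \eqref{solution_blake} to a two‑input driftless system on $\mathbb{R}^3$. For condition~(1), switch off every deformation mode except the pair $\alpha_n,\alpha_{n+1}$ (so $\alpha_m\equiv\beta_m\equiv 0$ for $m\notin\{n,n+1\}$). In the outer sum of \eqref{solution_blake} only the index‑$n$ term can then survive: the index‑$(n-1)$ term requires $\alpha_{n-1}$ or $\beta_{n-1}$, and every cross term inside the index‑$n$ term carries a $\beta$ factor. Hence, to the displayed order,
\[
\dot h \;=\; c\bigl((n+1)^2\alpha_n\dot\alpha_{n+1}-(n^2-4n-2)\alpha_{n+1}\dot\alpha_n\bigr),\qquad c:=\frac{\epsilon^2}{(2n+1)(2n+3)}\neq 0.
\]
Taking the scalar axial displacement $h$ as the group variable and $\alpha_n,\alpha_{n+1}$ as the shape variables, the state is $q=(h,\alpha_n,\alpha_{n+1})\in\mathbb{R}^3$, and with $u_1=\dot\alpha_n,\ u_2=\dot\alpha_{n+1}$ one gets $\dot q=u_1 f_1(q)+u_2 f_2(q)$ where
\[
f_1=\frac{\partial}{\partial\alpha_n}-c(n^2-4n-2)\,\alpha_{n+1}\frac{\partial}{\partial h},\qquad f_2=\frac{\partial}{\partial\alpha_{n+1}}+c(n+1)^2\,\alpha_n\frac{\partial}{\partial h}.
\]

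Next I would compute $[f_1,f_2]$. Only the $\partial_h$‑components of $f_1,f_2$ are nonconstant, depending linearly on $\alpha_{n+1}$ and $\alpha_n$ respectively, so
\[
[f_1,f_2]=c\bigl((n+1)^2+(n^2-4n-2)\bigr)\frac{\partial}{\partial h}=c\,(2n^2-2n-1)\,\frac{\partial}{\partial h}.
\]
Since $2n^2-2n-1\neq 0$ for every $n\in\mathbb{N}$, the matrix with columns $f_1,f_2,[f_1,f_2]$ (in the basis $\partial_{\alpha_n},\partial_{\alpha_{n+1}},\partial_h$) is lower triangular with diagonal $(1,1,c(2n^2-2n-1))$, hence nonsingular at every $q$. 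Thus $\mathrm{Lie}_q\{f_1,f_2\}=\mathbb{R}^3$ on all of $\mathbb{R}^3$, in particular on an open connected neighbourhood $\mathcal{O}$ of the spherical shape $S_0$ (at which $\alpha_n=\alpha_{n+1}=0$), and Lemma~1 gives local controllability at $S_0$. Reinstating the smooth $O(\epsilon^3)$ tail of \eqref{solution_blake} (equivalently, using the exact model \eqref{eq:ode_swimmer}) merely perturbs the bracket coefficient, which for small $\epsilon$ is still nonzero at $S_0$ and, by continuity, nearby; the conclusion is unchanged.

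Condition~(2) runs identically. Keeping only $\beta_n,\beta_{n+1}$ leaves just the purely azimuthal part of the index‑$n$ term, $\dot h=-c\bigl(n(n+2)\beta_n\dot\beta_{n+1}-n^2\beta_{n+1}\dot\beta_n\bigr)$, so that $f_1=\partial_{\beta_n}+c\,n^2\beta_{n+1}\,\partial_h$ and $f_2=\partial_{\beta_{n+1}}-c\,n(n+2)\beta_n\,\partial_h$, whence $[f_1,f_2]=-c\,(2n^2+2n)\,\partial_h=-2c\,n(n+1)\,\partial_h\neq 0$ for every $n\in\mathbb{N}$. The same triangular‑matrix argument together with Lemma~1 finishes the proof.

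The bracket computation itself is short; the step that genuinely needs care — and the only real obstacle — is the reduction: one must check that turning off all but one pair of modes annihilates every other term of the infinite series \eqref{solution_blake}, so that the reduced system is truly three‑dimensional and driftless, and one must be explicit about passing from the force‑operator model \eqref{eq:ode_swimmer} to the leading‑order series and about truncating the $O(\epsilon^3)$ remainder, i.e.\ that these approximations do not destroy the rank condition near $S_0$. The hypothesis $n>2$ is not needed for the algebra (the coefficients $2n^2-2n-1$ and $2n^2+2n$ are already nonzero at $n=2$); it only serves to exclude the pair $P_2,P_3$ / $V_2,V_3$ already treated in \cite{loheac2013controllability}, so that the statement records the new range of admissible deformation modes.
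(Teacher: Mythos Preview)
Your proof is correct and follows essentially the same route as the paper: reduce \eqref{solution_blake} to a two--mode driftless system on $\mathbb{R}^3$, compute the single Lie bracket of the two control vector fields, observe it is nonzero, and invoke Chow's theorem. Your radial bracket coefficient $2n^2-2n-1$ differs from the paper's $2n^2+6n+3$ (the paper seems to have flipped the sign of $-4n-2$), but both are nonzero for every $n\in\mathbb{N}$, and your azimuthal coefficient $-2n(n+1)$ matches the paper's, so the argument and conclusion are unaffected.
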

\begin{proof}
Consider the case of purely radial deformations. In this case, near the shape $S_0$, \eqref{solution_blake} becomes
\begin{equation}
    \dot{h} = \epsilon^2 \sum_{n=2}^{\infty} \Bigg( \frac{(n+1)^2 \alpha_n\dot{\alpha}_{n+1} - (n^2-4n-2)\alpha_{n+1}\dot{\alpha}_n)}{(2n+1)(2n+3)} \Bigg)
\end{equation}
We observe that in case the deformations have to be from any 2 Legendre polynomials, for control vector fields to be non-trivial, they have to be from successive polynomials. Let us thus consider the control system in the principal kinematic form corresponding to any 2 successive radial deformations as follows-

\begin{align*}
\begin{bmatrix}
\dot{h} \\
\dot{\alpha}_n	\\
\dot{\alpha}_{n+1}
\end{bmatrix} &= f_n \dot{\alpha}_n + f_{n+1} \dot{\alpha}_{n+1} \\
 & = \begin{bmatrix}
\frac{- \epsilon^2(n^2-4n-2)\alpha_{n+1}}{(2n+1)(2n+3)} \\
1	\\
0	
\end{bmatrix}\dot{\alpha}_n + \begin{bmatrix}
\frac{\epsilon^2 (n+1)^2 \alpha_n}{(2n+1)(2n+3)} \\
0	\\
1	
\end{bmatrix}\dot{\alpha}_{n+1}
\end{align*}
We compute the Lie bracket of the control vector fields $f_n$ and $f_{n+1}$ to get
\begin{equation}
    [f_n, f_{n+1}] = \epsilon^4 \begin{bmatrix}
\frac{2n^2+6n+3}{4n^2+8n+3} \\
0	\\
0	
\end{bmatrix}
\end{equation}
We observe that for any $n \in \mathbb{N}$, $[f_n, f_{n+1}]$ is non-zero. Hence, the Lie algebra of the vector control vector fields $f_n$ and $f_{n+1}$ spans $\mathbb{R}^3$. Also, we note that $f_n, f_{n+1}$ and hence $[f_n, f_{n+1}]$ are $C^{\infty}$ functions. Hence, there exists $\epsilon_n > 0$ such that for every $(h, \: \alpha_n, \: \alpha_{n+1}) \in {0} \times (−\epsilon_n, \epsilon_n)^2$, the Lie algebra of $f_n$ and $f_{n+1}$ spans $\mathbb{R}^3$. This proves the first statement of the theorem. 

The second statement of the theorem is also proved in the similar way. The control system for azimuthal deformations from the successive values of $V_i$'s takes the following form -
\begin{align*}
\begin{bmatrix}
\dot{h} \\
\dot{\beta}_n	\\
\dot{\beta}_{n+1}
\end{bmatrix} &= g_n \dot{\beta}_n + g_{n+1} \dot{\beta}_{n+1} \\
& = \epsilon^2 \begin{bmatrix}
\frac{-n^2\beta_{n+1}}{(2n+1)(2n+3)} \\
1	\\
0	
\end{bmatrix}\dot{\beta}_n + \epsilon^2 \begin{bmatrix}
\frac{-n(n+2) \beta_n}{(2n+1)(2n+3)} \\
0	\\
1	
\end{bmatrix}\dot{\beta}_{n+1}
\end{align*}
We compute the Lie bracket of the control vector fields $g_n$ and $g_{n+1}$ to get
\begin{equation}
    [g_n, g_{n+1}] = \begin{bmatrix}
\frac{-2n^2-2n}{4n^2+8n+3} \\
0	\\
0	
\end{bmatrix}
\end{equation}
Like in the case of control vector fields for the radial deformations case, we observe that for any $n \in \mathbb{N}$, $[g_n, g_{n+1}]$ is non-zero. Hence, the Lie algebra of the vector control vector fields $g_n$ and $g_{n+1}$ spans $\mathbb{R}^3$. We again note that $g_n, g_{n+1}$ and hence $[g_n, g_{n+1}]$ are $C^{\infty}$ functions. Hence, there exists $\epsilon_n > 0$ such that for every $(h, \: \beta_n, \: \beta_{n+1}) \in {0} \times (−\epsilon_n, \epsilon_n)^2$, the Lie algebra of $f_n$ and $f_{n+1}$ spans $\mathbb{R}^3$. Thus the system is locally controllable at $S_0$ for any $2$ successive azimuthal deformations as well. This proves the second statement of the proof.

Additionally, if the shape control also allows radial and azimuthal deformations along the $e_x$ and $e_y$ directions, the swimmer is locally controllabile on $\mathbb{R}^3$.
\end{proof}

\section{Gait design}
A natural question that arises after the controllability test is whether one can synthesize controls which generate a desired trajectory of the system in its configuration space. In this section we look at an open loop gait design algorithm. We first present an algorithm to design a gait for the spherical flexible swimmer, and then extend this approach to the symmetric Purcell's swimmer - which is also a low Reynolds micro-swimming mechanism but with rigid links.

The algorithm rests on the notions of holonomy,
curvature\footnote{The curvature form $d\mathbb{A}$ is given by the exterior derivative of the local connection form $\mathbb{A}$.} and Stokes' theorem, and an essential feature is the Abelian property of the underlying group involved in describing the system. The holonomy in such systems relates to the gross displacements.
Stokes' theorem relates a contour integral to a volume integral, and the Abelian nature of the group helps synthesize the loop which encloses the desired volume under the surface formed by the curvature over the base space. 

\subsection{Gait design using Stokes' theorem}
A connection on a trivial principal bundle $Q = M \times G$ uniquely determines a system's trajectory in the full space from a loop $C$ in the base space. Recall that the system equation for a principally kinematic system is of the following form, see \cite{bloch1996nonholonomic}.
\begin{equation}\label{kinematics}
\xi = - \mathbb{A}(x) \dot{x}
\end{equation}
where $\xi$ is in the Lie algebra of the group $G$, $x \in M$ which is the base space and $\dot{x} \in T_xM$. With $e$ as the identity of the group $G$, integration of this equation with the initial condition $g(0) = e$ gives us the geometric phase corresponding to the closed curve $C \subset M$, where the geometric phase of a system evolving on a principal bundle is defined as follows.

\begin{definition}
A geometric phase or holonomy of a closed curve $c : [0,1]\:\rightarrow M$ is the net change in the group variable determined by the horizontal lift of $c$ \cite{kelly1995geometric}.

\begin{figure}[h!]
\centering
\includegraphics[scale=0.25]{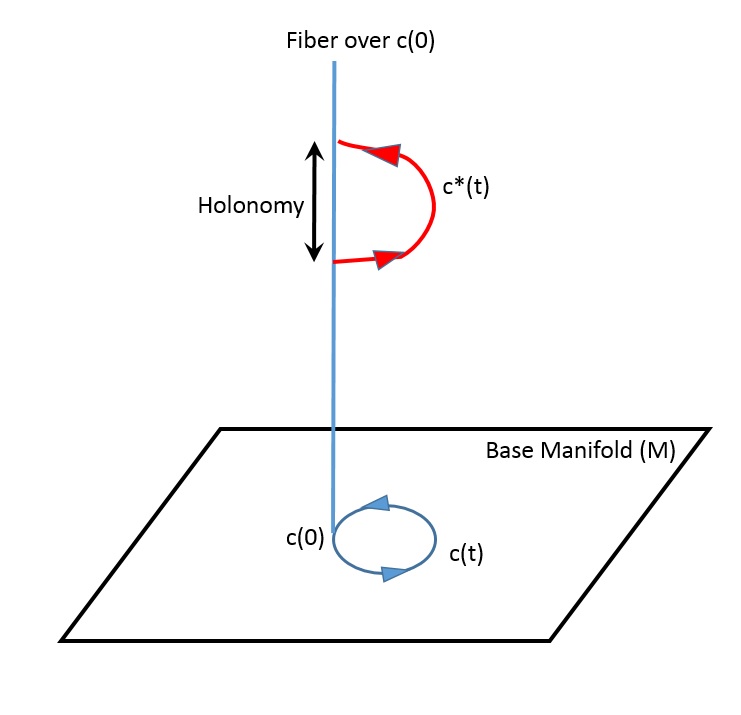} 
\caption{Horizontal lift of a base curve, Holonomy}
\label{Horizontal lift and holonomy}
\end{figure}
\end{definition}

If system is controllable, the following algorithm from \cite{kelly1995geometric} can be used to steer the system from any initial configuration $q(t_0) \in Q$ to any other configuration $q(t_f) \in Q$. 

\begin{itemize}\label{algo}
\item Choose a time $T_1 > 0$ and a path $x(.)$ such that $x(0) = x_0$ and  $x(T_1) = x_{T_1}$. Let $g(T_1)$ be the corresponding value of the group variable at the fixed time $T_1$, as determined by the system kinematics.

\item Choose a closed one dimensional curve $C \subset M$ such that the geometric phase associated with $C$ is given by $g_f(g(T_1))^{-1} \in G$
\end{itemize}

In general, the system equation $\eqref{kinematics}$ is difficult to integrate in a closed form. However, if the structure group $G$ is an Abelian Lie group, then the integral can be simplified considerably. Assume that $x(.) \in C$ is chosen such that $x$ traverses $C$ in unit time, Then, in the Abelian case, the solution to the equation $\eqref{kinematics}$ can be written as 
\begin{equation}\label{eqn_holonomy}
g(1) = exp\:(\int_0^1 (-\mathbb{A}(x).\dot{x})dt).g(0)
\end{equation}
where the exponential map $exp : \mathfrak{g} \rightarrow G$ is defined as a mapping from the Lie algebra $\mathfrak{g}$ to the Lie group $G$. That is, we get the net motion along the fiber by integrating the Lie algebra valued quantity $\mathbb{A}(x)\dot{x}$ along the base curve and then  taking its exponential.  We can further simplify this line integral by using Stokes' theorem that states:
{\it For a differential $k$-form $\omega$ and $S \subset M$ a $p$-dimensional region bounded by a curve $C = \partial S$}:
\begin{equation}\label{stokes_theorem}
\int_{\partial S} \omega = \int_S d \omega
\end{equation}
We can rewrite equation \eqref{eqn_holonomy} as
\begin{equation}
g(1) = exp \: (-\int_C \mathbb{A}(x))
\end{equation}
Then by applying Stokes' theorem, the net holonomy is
\begin{equation}
g(1) = exp(- \int_C \mathbb{A}(x)) g(0) = exp(\int \int_S d\mathbb{A}(s))g(0)
\end{equation}
That is, for systems on Abelian principal bundles the holonomy associated with the base space curve $C$ is same as the appropriately weighted area of any surface bounded by $C$.

\subsection{Gait design for the spherical swimmer}
We now apply the gait synthesis technique of the previous section to the flexible swimmer, since its Lie group is Abelian. Using the generalized Stokes' theorem, it can be shown that the volume enclosed by the a shape loop and the curvature surface of the local form of connection gives the displacement of the locomoting body under the cyclic shape change. We consider the spherical swimmer model \eqref{kinematics} evolves on a trivial principal bundle with Lie group $G =\mathbb{R}$ and the shape space $M$ as $\mathbb{R} \times \mathbb{R}$. The equation \eqref{eq:control_affine_system} is also in the principal kinematic form \eqref{kinematics} with the connection form given as
\begin{align}
\mathbb{A}(\alpha_2, \alpha_3) = & - \epsilon^2 \sum_{n=2,3} \frac{(n^2-4n-2)\alpha_{n+1}}{(2n+1)(2n+3)}d\alpha_n + \nonumber \\ 
& \frac{\epsilon^2 (n+1)^2 \alpha_n}{(2n+1)(2n+3)}d  \alpha_{n+1}
\end{align}

We consider radial periodic shape changes composed of small deformation using the second and third Legendre polynomials $P_2(x) = \frac{1}{2}(3x^2-1), P_3(x) =\frac{1}{2}(5x^3-3x)$. 
%
The same approach can be used to construct the gaits for azimuthal deformations. Using the expression for $M_(i,j)$ for $i,j \in {1,2}$, we now compute the curvature $2$-form for the spherical swimmer by taking the exterior derivative of the connection form as follows

\begin{align*}
d\mathbb{A} &= \frac{\partial \mathbb{A}_1}{\partial \alpha_1} d\alpha_1 \wedge d \alpha_2 + \frac{\partial \mathbb{A}_2}{\partial \alpha_1} d\alpha_2 \wedge d \alpha_1 \\
&= \left( \frac{\partial \mathbb{A}_1}{\partial \alpha_1} - \frac{\partial \mathbb{A}_2}{\partial \alpha_2} \right) d\alpha_1 \wedge d \alpha_2 \\
& = \left( \frac{9}{35} - \frac{6}{35} \right) d\alpha_1 \wedge d \alpha_2 \quad \text{using \eqref{solution_blake}}\\
& = \frac{3}{35} d\alpha_1 \wedge d\alpha_2
\end{align*}

We note that the curvature for the spherical flexible swimmer is constant.

\subsubsection{Algorithm}
We now present an algorithm for the open loop gait design for two different situations, to achieve commanded displacement in shape and group space. We first identify the maximum holonomy that the system can achieve by performing a loop in the shape space defined by the bounds on the shape variables. We denote this curve by $L_{max}$. In our case we define these bounds by restricting the absolute value of the coefficients by $0.2$, which limit the shape space within a square of length of sides as $0.4$. 
The maximum holonomy achieved by staying within these bounds is denoted by $H_{max}$. Since the curvature of the spherical swimmer is a constant, $H_{max}$ with these bounds on $\alpha_1,\: \alpha_2$ is $d\mathbb{A} \times 0.4 \times 0.4 = \frac{3}{35} \times 0.4 \times 0.4 = 0.0135m$
\begin{enumerate}
\item {\it Condition 1}: Initial and final shapes are identical
\begin{itemize}
\item If $H_{max}$ is positive, choose the direction of the loop traversal to be anticlockwise in the region of positive curvature
\item We then divide the task of achieving total holonomy as follows, 
\begin{equation}
H_{comm} = m \: H_{max} + H_{comm}\: modulo \: H_{max}
\end{equation}
where $m \in \mathbb{N}$.
\item Find that loop in the shape space which starts from the initial point of the system in the shape space and yields holonomy as $H_{comm}\: modulo \: H_{max}$. We denote this loop by $L_1$
\item \textbf{Maneuver 1:} If $m >0$, that is, if $H_{comm} \geq H_{max}$, from the given initial shape we do a shape maneuver to reach any convenient point on the maximal holonomy shape loop. We denote this curve by $C_1$
\item \textbf{Maneuver 2:} Then we perform the trace the $L_{max}$ loop $m$ times in appropriate direction
\item \textbf{Maneuver 3:} Then we trace again the curve $C_1$ in the direction opposite to that in maneuver 1 to bring the swimmer to the initial shape
\item \textbf{Maneuver 4:} Trace loop $L_{1}$ once in appropriate direction.
\end{itemize}
\item {\it Condition 2}: Initial and final shapes are different
\begin{itemize}
\item \textbf{Maneuver 1:} Do shape change to go along the straight line joining the initial and final shapes. The group displacement in this maneuver is denoted as $H_1$
\item \textbf{Maneuver 2:} Divide the total commanded holonomy as follows 
\begin{equation}
H_{comm} = m \: H_{max} + (H_{comm}-H_1)\: modulo \: H_{max}
\end{equation}
where $m \in \mathbb{N}$.
\item \textbf{Maneuver 3:} If $m >0$, that is, if $H_{comm} \geq H_{max}$, from the given initial shape we do a shape maneuver to reach any convenient point on the maximal holonomy shape loop $L_{max}$. We denote this curve by $C_3$
\item \textbf{Maneuver 4:} Then we trace the $L_{max}$ loop $m$ times in appropriate direction
\item \textbf{Maneuver 5:} Then we trace again the curve $C_1$ in the direction opposite to that in maneuver 1 to bring the swimmer to the initial shape
\end{itemize}
\end{enumerate}

\subsubsection{Results}
We illustrate an example for gait synthesis using the algorithm proposed. We restrict the shape variables' absolute value to $0.2$ and the swimmer starts with shape variables $\alpha_1= \alpha_2 =-0.2$, which corresponds to the initial shape as given in the figure \ref{case_2_initial_shape}. The swimmer is supposed to achieve the net displacement of $0.4m$ by performing shape maneuver by following the algorithm 1 above and coming back to the point in the shape space $\alpha_1 = \alpha_2 = -0.2$. Figures \ref{case_2_initial_shape} to \ref{case2_shape_space} shows the results.






\begin{figure}[h!]
\centering
\includegraphics[scale=0.3]{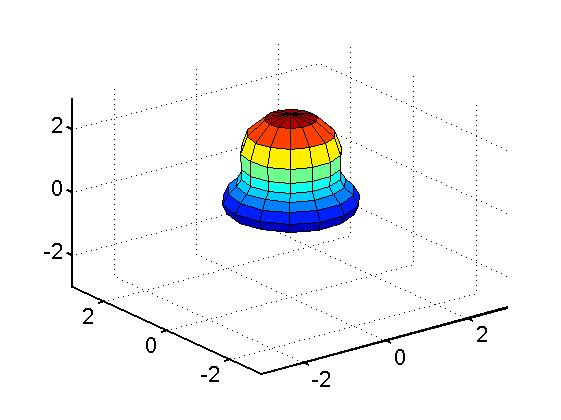}
\caption{Initial and final shape}
\label{case_2_initial_shape}
\end{figure}

\begin{figure}[h!]
\centering
\includegraphics[scale=0.2]{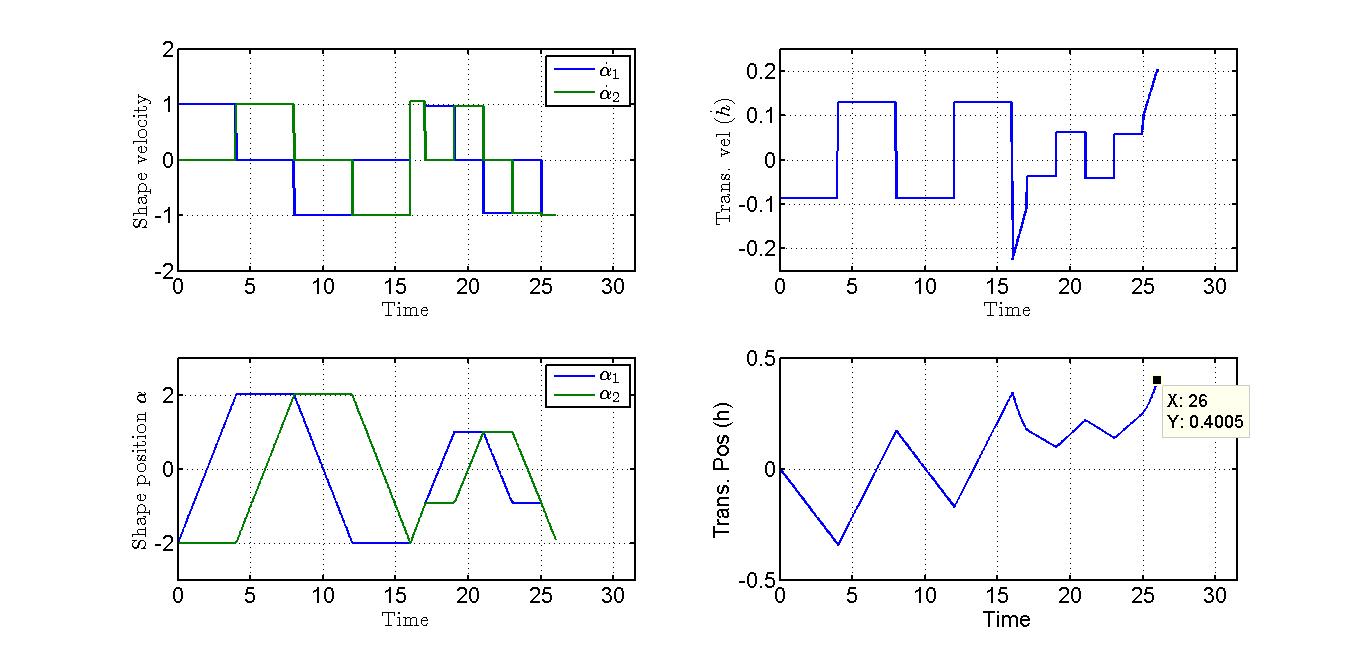}
\caption{Variation of the states and derivatives}
\label{case2_states_and_derivatives}
\end{figure}

\begin{figure}[h!]
\centering
\includegraphics[scale=0.25]{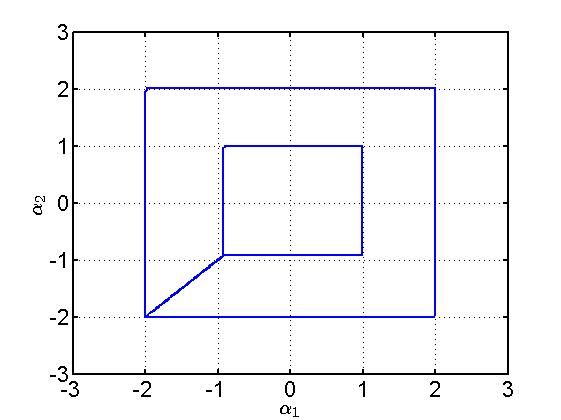}
\caption{The loop in the shape space}
\label{case2_shape_space}
\end{figure}

\subsection{Gait design for the symmetric Purcell's swimmer}
We  now mimic the procedure of the previous section to the symmetric Purcell's swimmer - a swimmer which has rigid links. The symmetric Purcell's swimmer is a variant of the 3-link planar Purcell's swimmer which has 4 limbs, see figure \ref{Symmetric_Purcells}. The limbs on one end of the central link are actuated symmetrically with respect to the axis along the length of the center link, i.e. $\alpha_1 = -\alpha_3$ and $\alpha_2 = \alpha_4$. 

\subsubsection{The model for the symmetric Purcell's swimmer}
The local form of connection for the symmetric Purcell's swimmer is a real-valued one-form on the base manifold $\mathbb{S}^1 \times \mathbb{S}^1$. Its expression for a viscous drag coefficient $k =1$ and for the limb length $L =1$ is obtained as follows, see \cite{kadam2015geometric} for the derivation,

\begin{equation}\label{symm_connection_form}
\mathbb{A}(x) = \frac{4\sin\alpha_1 d \alpha_1 -  4\sin\alpha_2 d \alpha_2 }{2 \sin^2 \alpha_1 + 2 \sin^2 \alpha_2 + 5}
\end{equation}
\begin{figure}{}
  \centering
  \includegraphics[scale=0.5]{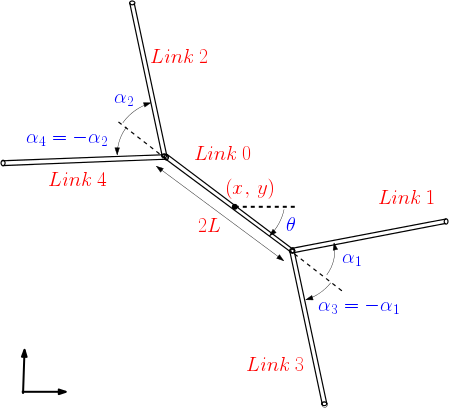}
  \caption{Symmetric Purcell's swimmer}
  \label{Symmetric_Purcells}
\end{figure}
The curvature $2$-form for the swimmer by taking the exterior derivative of the connection form as -
\begin{equation}
d\mathbb{A}=\displaystyle \displaystyle \frac{16 \sin \alpha_1 \sin \alpha_2 (\cos \alpha_1 - \cos \alpha_2)}{ (2 \sin^2 \alpha_1 + 2 \sin^2 \alpha_2 + 5)^2} \:\: d\alpha_1 \wedge d\alpha_2
\end{equation}
Figure \ref{sym_purc_curvature} shows the variation of the curvature as a surface, which along with a given base loop characterizes the net holonomy of the system.
\begin{figure}{}
  \centering
  \includegraphics[scale=0.22]{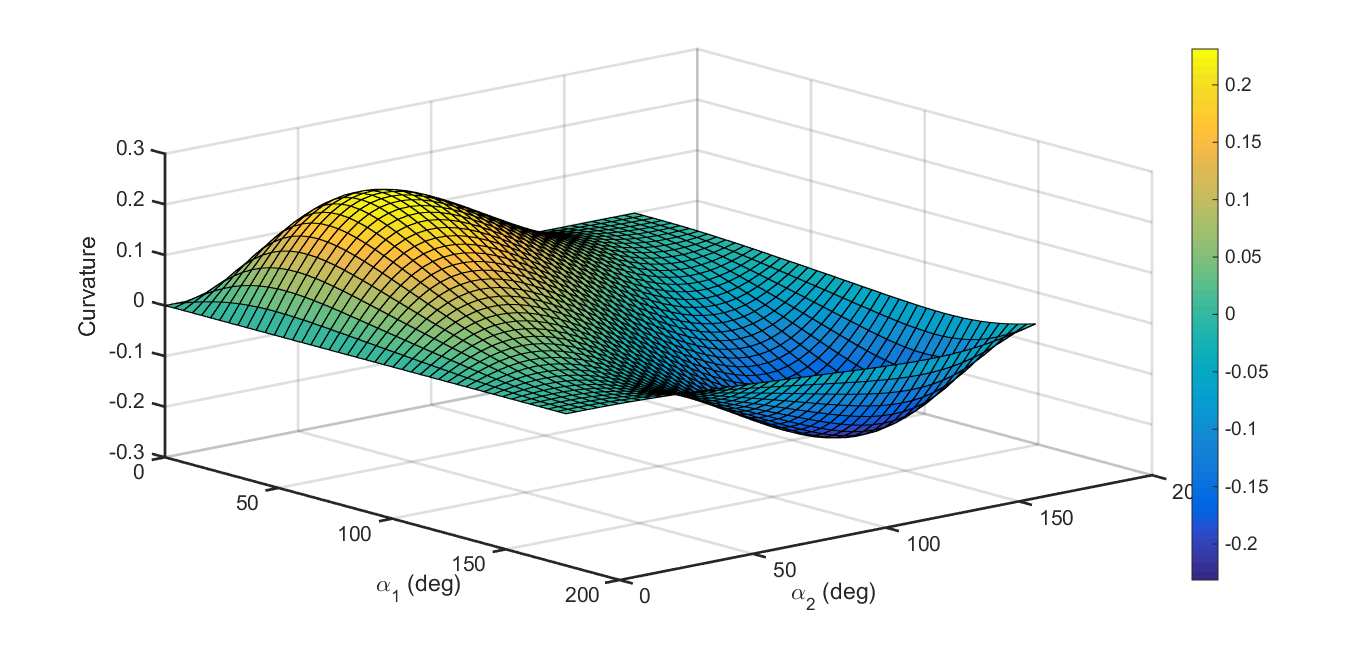}
  \caption{Curvature function}
  \label{sym_purc_curvature}
\end{figure}
We make a few observations from the curvature function for the swimmer as follows -
\begin{itemize}
\item The curvature attains zero value along the set $\mathcal{S}_1 = \{(\alpha_1, \alpha_2)\: | \: \alpha_1=0 \text{ or } \alpha_2 =0 \text{ or } \alpha_1 + \alpha_2 = \pi \}$
\item The curvature has a positive value in the region below the line $\alpha_1 = \alpha_2$, and vice versa.
\item The highest value of the curvature function is 0.2313, attained at $\alpha_1 = 44.12$ deg, $\alpha_2 = 44.12$ deg
\item The lowest value of the curvature function is -0.2313, attained at $\alpha_1 = 136.4$ deg, $\alpha_2 = 136.4$ deg
\item The volume enclosed by the triangular region in base space where the curvature is either positive or negative curvature region is $0.4039$
\end{itemize}

\subsubsection{Results}
We now synthesize an open loop gait of the symmetric Purcell's swimmer to achieve a commanded group displacement $H_{comm}$ of $+0.6$m. The swimmer with $k=1, L=1$ is considered. The objective here is to start and end at the point $(\alpha_1, \alpha_2) = (0,0) $ of the base space. We use algorithm  \ref{algo} here to achieve the objective as follows -

\begin{itemize}
\item We identify that the maximum dispacement $(H_{max})$ which can be achieved in a single base loop traversal is $0.41m$. The loop corresponding to the loop is the outer tranangle seen in \ref{Base_curve_for_the_multi_loop_holonomy}.
\item Identify $H_{comm} = 0.6m$ is greater than the maximum $(H_{max})=0.41m$.
\item If $H_{comm}$ is positive, choose the direction of the loop traversal to be anticlockwise in the region of positive curvature
\item We then divide the task of achieving total holonomy as follows, 
\begin{equation}
H_{comm} = m \: H_{max} + H_{comm}\: modulo \: H_{max}
\end{equation}
where $m$ is an integer. For the example in the figure \ref{Base_curve_for_the_multi_loop_holonomy} we get $m=1$
\item Identify that base circle $C_{circ}$ which yields holonomy as $H_{comm}\: modulo \: H_{max}$ on anticlockwise traversal
\item Starting from $(\alpha_1, \alpha_2) = (0,0)$ traverse a single loop along the triangle of the region of positive curvature to reach back to $(0,0)$.
\item From $(0,0)$ move along line $\alpha_1 = \alpha_2$ till a point of intersection with the circle $C_{circ}$
\item Fully traverse $C_{circ}$ in anticlockwise direction to reach back at the same point of intersection
\item Traverse back along the line $\alpha_1 = \alpha_2$ till the origin $(0,0) \in M$ is reached
\end{itemize}

\begin{figure}[!htb]
\centering
\includegraphics[scale=.22]{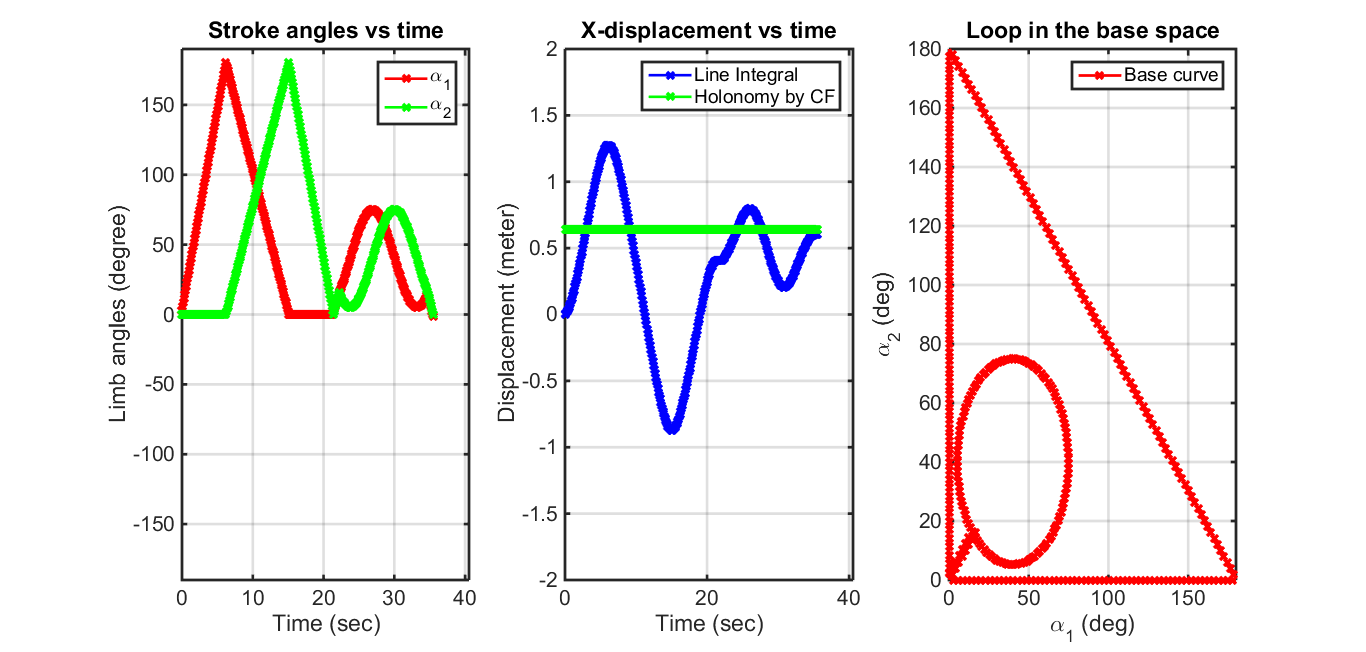}
\caption{Base curve for the multi loop holonomy}
\label{Base_curve_for_the_multi_loop_holonomy}
\end{figure}




\addtolength{\textheight}{-12cm}   





\bibliography{bib_spherical_swimmer}



\end{document}